\theoremstyle{plain}
\newtheorem{theorem}{Theorem} 
\newtheorem{proposition}[theorem]{Proposition}
\newtheorem{definition}[theorem]{Definition}
\newtheorem{example}[theorem]{Example}
\journal{Journal of Mathematical Psychology Templates}
\begin{document}

\begin{frontmatter}

\title{An invitation to coupling and copulas:\\\large with applications to multisensory modeling} %

\author{Hans Colonius\fnref{myfootnote}}
\address{University of Oldenburg, Oldenburg}
\fntext[myfootnote]{E-mail address: hans.colonius@uni-oldenburg.de\\\ead[url]{www.uni-oldenburg.de/en/hans-colonius/}}




\begin{abstract}
This paper presents an introduction to the stochastic concepts of \emph{coupling} and \emph{copula}. Coupling means the construction of a joint distribution of two or more random variables that need not be defined on one and the same probability space, whereas a copula is a function that joins a multivariate distribution to its one-dimensional margins. Their role in stochastic modeling is illustrated by examples from multisensory perception. Pointers to more advanced and recent treatments are provided.
\end{abstract}


\end{frontmatter}


\section{Introduction}
The concepts of \emph{coupling} and \emph{copula} refer to two related areas of probability and statistics whose importance for mathematical psychology has  arguably been ignored so far.  This paper gives an elementary, non-rigorous introduction to both concepts.  Moreover, applications of both concepts to modeling in a multisensory context are presented. Briefly, \textit{coupling} means the construction of a joint distribution of two or more random variables that need not be defined on one and the same probability space, whereas a \textit{copula} is a function that joins a multivariate distribution to its one-dimensional margins. 

Empirical psychological data are typically collected under various experimental conditions, e.g. speed vs. accuracy instructions in a reaction time (RT) experiment, different numbers of targets and nontargets in a visual search paradigm, or hits and false alarms in a detection task. Data collected in any particular condition are considered as realizations of some random variable with respect to an underlying probability (sample) space. Note that there is no principled way of stochastically relating random variables observed under different conditions: for example, an observed time to find a target in a condition with $n$ targets, $T_n$, to that for a condition with $n+1$ targets, $T_{n+1}$. The reason simply is that $T_n$ and $T_{n+1}$ cannot be observed within the same trial, that is, they do not refer to (elementary) outcomes defined in the same probability space. This does not rule out numerically comparing average data under both conditions, or even the entire distributions functions; however, any statistical hypothesis, e.g. about the correlation between $T_n$ and $T_{n+1}$, would be void. A coupling construction, on the other hand, could allow for just that. For example, as shown below, one can turn a statement about an ordering of two RT distributions, which are a-priori not stochastically related at all, into a statement about point-wise ordering of the corresponding random variables on a common probability space. The choice of a particular coupling construction, however, is somewhat arbitrary and may be more or less useful, depending on the specific goals of the modeler. It should be noted that coupling turns out to be a key concept in a general framework for ``contextuality'' being developed by E. N. Dzhafarov and colleagues \citep{Dzhafarovinpress}.

The concept of copula has stirred a lot of interest in recent years in several areas of statistics, including finance, mainly for the following reasons \citep[see e.g.,][]{joe2015}: it allows one (i)~to study the structure of stochastic dependency in a ``scale-free'' manner, i.e., independent of the specific marginal distributions, and (ii)~to construct families of multivariate distributions with specified properties. We will demonstrate in the final section how copulas can be used to test models of multisensory integration and to derive measures of the amount of multisensory integration occurring in a given context.

\section{Coupling}
We begin with a few common definitions\footnote{If not indicated otherwise, most of the material on coupling in this section is taken from the monograph by \cite{thorisson2000}.}. Let $X$ and $Y$ be random variables defined on a probability space $(\Omega, \mathcal{F}, \mathbb{P})$, with values in $(\mathbb{R},\mathcal{B},P)$. ``Equality'' of random variables can be interpreted in different ways. Random variables $X$ and $Y$ are \textit{equal in distribution} iff they are governed by the same probability measure:
\[X =_d Y \; \Longleftrightarrow \; P(X\in A) =P(Y\in A)  \; \text{, for all $A \in \mathcal{B}$}.\] 
$X$ and $Y$ are \textit{point-wise equal} iff they agree for almost all elementary events\footnote{Here, a.s. is for ``almost surely''.}:
\[X \overset{a.s.}{=} Y \; \Longleftrightarrow \;P(\{\, \omega \mid X(\omega)=Y(\omega)\})=1.\]
Let $X$ be a real-valued random variable with distribution function $F(x)$. Then the \textit{quantile function} of $X$ is defined as 
\begin{equation}
Q(u)=F^{-1}(u)=\inf\{x\,|\,F(x)\ge u\}, \;\;\;0\le u \le 1.
\end{equation}
For every $-\infty <x<+\infty$ and $0<u<1$, we have
\[ F(x) \ge u \;\; \mbox{ if, and only if, } Q(u) \le x. \]
Thus, if there exists $x$ with  $F(x)=u$, then $F(Q(u))=u$ and $Q(u)$ is the smallest value of $x$ satisfying $F(x)=u$. If $F(x)$ is continuous and strictly increasing, $Q(u)$ is the unique value $x$ such that $F(x)=u$.

Moreover, if $U$ is a standard uniform random variable (i.e., defined on $[0,1]$), then $X=Q(U)$ has its distribution function as $F(x)$; thus, any distribution function can be conceived as arising from the uniform distribution transformed by $Q(u)$.
\subsection{Definition and examples}
\begin{quote}
\textbf{Note to the reader: We enumerate Definitions, Theorems, Examples  sequentially, so Definition 1 is followed by Example 2, and so on.
}\end{quote}
\begin{definition}
A \textit{coupling} of a collection of random variables $\{X_i, i \in \mathrm{I}\}$, with $\mathrm{I}$ denoting some index set, is a family of jointly distributed random variables 
\[(\hat{X}_i: \, i \in \mathrm{I}) \;\; \mbox{ such that } \; \;  \hat{X}_i =_dX_i,\;\; i \in \mathrm{I}. \]
\end{definition}
Note that the joint distribution of the $\hat{X}_i$ need not be the same as that of  $X_i$; in fact, the $X_i$ may not  even have a joint distribution because they need not be defined on a common probability space. However, the family $(\hat{X}_i: \, i \in \mathrm{I})$ has a joint distribution with the property that its marginals are equal to the distributions of the individual $X_i$ variables. The individual $\hat{X}_i$ is also called a \textit{copy of} $X_i$.
\begin{example}[Coupling two Bernoulli random variables]
Let $X_p, X_q$ be  Bernoulli random variables, i.e.,  \[P(X_p=1)=p \;\mbox{ and }\; P(X_p=0)=1-p,\] and $X_q$ defined analogously. Assume $p<q$; we can couple $X_p$ and $X_q$ as follows: 
\newline Let $U$ be a uniform random variable on $[0, 1]$, i.e., for $0\le a < b \le 1$,
\[  P(a < U \le b)= b-a.\]  Define 
\begin{align*}
 \hat{X}_p =
\begin{cases}
1, &\text{if $0 <U\le p$;} \\
0, & \text{if $p < U \le 1$}
\end{cases} ; \;\; &\;\; 
 \hat{X}_q =
\begin{cases}
1, &\text{if $0 <U\le q$;} \\ 
0, & \text{if $q < U \le 1$}.
\end{cases}
\end{align*}
Then $U$ serves as a common source of randomness for both $\hat{X}_p$ and $\hat{X}_q$. Moreover, $\hat{X}_p =_dX_p$ and $\hat{X}_q =_dX_q$, and $\mathrm{Cov}(\hat{X}_p,\hat{X}_q)=p (1-q)$. The joint distribution of $(\hat{X}_p,\hat{X}_q)$ is presented  in Table~\ref{bernoulli} and illustrated in Figure~\ref{bernoullidist}.
\begin{table}[htbp]
\caption{Joint distribution of two Bernoulli random variables.}
\label{bernoulli}
\begin{center}
\begin{tabular}{ccccc}
\multicolumn{5}{c}{$\hat{X}_q$} \\ 
& & 0 &  1 & \\ \hline
& 0 & $1-q$ & $q-p$ & $1-p$\\ 
\raisebox{1.5 ex} [-1.5ex]{$\hat{X}_p$} & 1 & \hfill 0 \hfill  & p & p \\ \hline
& & $1-q$ & $q$ & 1

\end{tabular}
\end{center}
\end{table}
\begin{figure}[htbp]
\begin{center}
\includegraphics[scale=0.28]{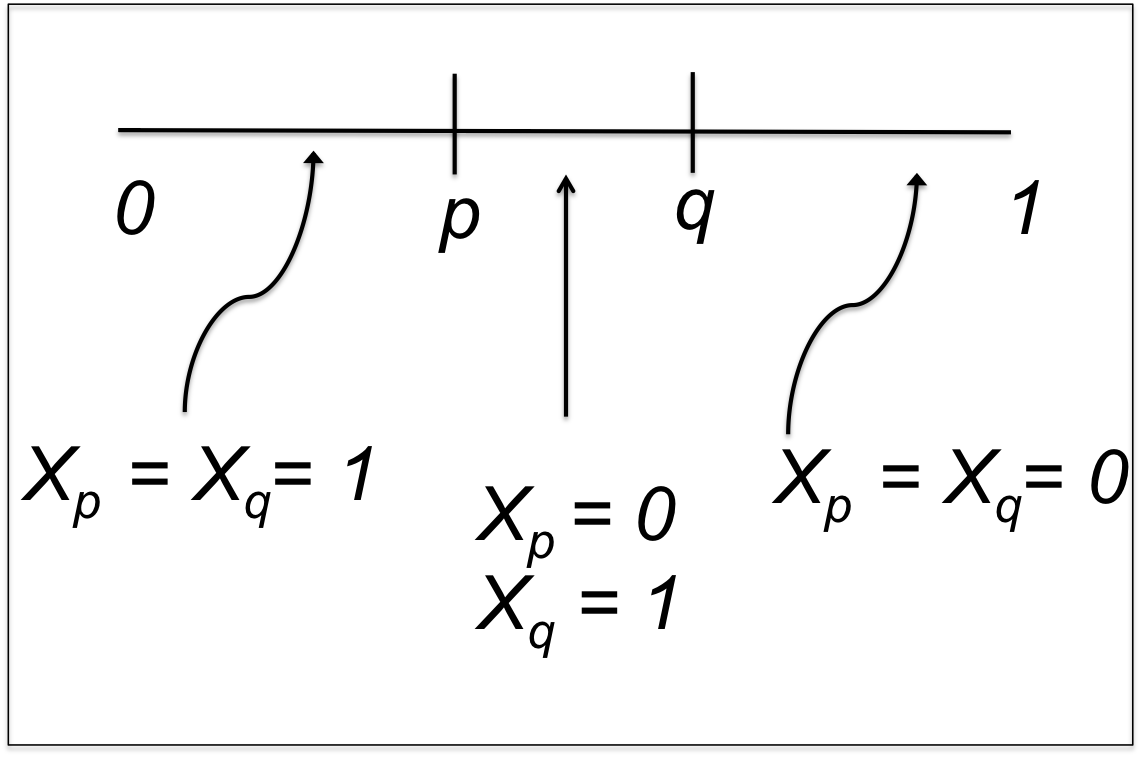}
\caption{Joint distribution of two Bernoulli random variables}
\label{bernoullidist}
\end{center}
\end{figure}

\end{example}
A simple, though somewhat fundamental, coupling is the following:
\begin{example}[Quantile coupling]\label{quantcoupling}
Let $X$ be a random variable with distribution function $F$, that is,
\[P(X \le x) =F(x), \;\; x \in \mathbb{R}.\]
Let $U$ be a uniform random variable on $[0,1]$. Then, for random variable $\hat{X}=F^{-1}(U)$,
 \[P(\hat{X}\le x)=P(F^{-1}(U)\le x)=P(U\le F(x))=F(x),\;\; x \in \mathbb{R},\]
that is, $\hat{X}$ is a copy of $X, \; \hat{X} =_dX$. Thus, letting $F$ run over the class of all distribution functions (using the same $U$), yields a coupling of all differently distributed random variables, the \textit{quantile coupling}. One can show that quantile coupling\index{coupling!quantile} consists of positively correlated random variables.
\end{example}
\subsection{Strassen's Theorem, coupling event inequality, and maximal coupling}
An important application of quantile coupling is in reducing a stochastic order\index{order!stochastic} between random variables to a pointwise (a.s.) order\index{order!pointwise}: Let $X$ and $X^\prime$ be two random variables with distribution functions $F$ and $G$, respectively. If there is a coupling $(\hat{X},\hat{X}^\prime)$ of $X$ and $X^\prime$ such that $\hat{X}$ is \textit{pointwise dominated} by $\hat{X}^\prime$, that is \[\hat{X} \le \hat{X}^\prime \;\; (\mbox{almost surely}),\]
then $\{\hat{X}\le x\} \supseteq \{\hat{X}^\prime\le x\}$,  which implies 
\[ P(\hat{X}\le x) \ge P(\hat{X}^\prime \le x),\]  and thus
 \[F(x) \ge G(x), \;\; x\in \mathbb{R}.\]  
Then $X$ is said to be \textit{stochastically} \textit{dominated} (or \textit{dominated in distribution}) by $X^\prime$: 
\[X \le_d X^\prime. \]

\noindent But one can show \citep[][p. 4]{thorisson2000} that the other direction also holds: \textit{stochastic} domination implies \textit{pointwise} domination.
Thus, we have a (simple) version of \textit{Strassen's theorem}\index{Strassen's theorem}\citep{strassen}:
\begin{theorem}[Strassen, 1965]
Let $X$ and $X^\prime$ be random variables. Then
\[X \le_d X^\prime\] if, and only if, there is a coupling $(\hat{X},\hat{X}^\prime)$ of $X$ and $X^\prime$ such that a.s. \[\hat{X} \le \hat{X}^\prime.\]
\end{theorem}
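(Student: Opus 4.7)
The reverse direction ($\Leftarrow$) was in fact already verified in the discussion immediately preceding the theorem statement, so my plan is to concentrate on the forward direction ($\Rightarrow$): assuming $X \le_d X'$, i.e., $F(x) \ge G(x)$ for every $x \in \mathbb{R}$, I would exhibit an explicit coupling whose components are pointwise ordered.

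The natural candidate is to apply the quantile coupling of Example~\ref{quantcoupling} simultaneously to both distributions, driven by a single uniform random variable. Fix a probability space carrying a standard uniform $U$ on $[0,1]$ and set
\[
\hat X := F^{-1}(U), \qquad \hat X' := G^{-1}(U).
\]
By the computation in Example~\ref{quantcoupling}, $\hat X =_d X$ and $\hat X' =_d X'$, so $(\hat X,\hat X')$ is a coupling of $X$ and $X'$. It then suffices to show that $\hat X(\omega) \le \hat X'(\omega)$ for every $\omega$ with $U(\omega) \in (0,1)$ (a set of full measure), which reduces to the deterministic inequality $F^{-1}(u) \le G^{-1}(u)$ for each $u \in (0,1)$.

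To establish this inequality I would unfold the definition of the quantile function. By definition $F^{-1}(u) = \inf\{x \mid F(x) \ge u\}$ and similarly for $G^{-1}$. The hypothesis $F(x) \ge G(x)$ for all $x$ yields the set inclusion
\[
\{x \mid G(x) \ge u\} \subseteq \{x \mid F(x) \ge u\},
\]
since any $x$ with $G(x) \ge u$ satisfies $F(x) \ge G(x) \ge u$. Taking infima, the infimum over the larger set is no larger than that over the smaller one, giving $F^{-1}(u) \le G^{-1}(u)$. Consequently $\hat X \le \hat X'$ surely on $\{U \in (0,1)\}$, and therefore almost surely.

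The main subtlety, though mild, is that $F^{-1}$ is defined via an infimum that need not be attained (one cannot simply invert both sides of $F \ge G$ pointwise to get $F^{-1} \le G^{-1}$). The set-inclusion argument above is designed precisely to sidestep this: it works uniformly whether or not $F$ and $G$ are continuous or strictly increasing, and it does not require the infima to be achieved. Everything else in the argument is bookkeeping, so this is really the only step where care is needed.
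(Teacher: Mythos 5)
Your proposal is correct and follows exactly the route the paper intends: the easy direction is the pointwise-to-stochastic argument given in the discussion preceding the theorem, and the converse is the quantile coupling of Example~\ref{quantcoupling} driven by a common uniform $U$, which is the argument the paper delegates to \citet[p.~4]{thorisson2000}. Your set-inclusion derivation of $F^{-1}(u)\le G^{-1}(u)$ from $F\ge G$ correctly handles the only delicate point (non-invertible distribution functions), so the proof is complete.
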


The following question is the starting point of many convergence and approximation results obtainable from coupling arguments. Let $X$ and $X^\prime$ be two random variables with non-identical distributions. How can one construct a coupling of $X$ and $X'$, $(\hat{X}, \hat{X}')$, such that $P(\hat{X}=\hat{X}')$ is maximal across all possible couplings? Here we follow the slightly more general formulation in \cite{thorisson2000} but limit presentation to the case of discrete random variables (the continuous case being completely analogous).

\begin{definition}\label{couplingevent}
Suppose $(\hat{X}_i : i \in \mathrm{I})$ is a coupling of $X_i, i \in \mathrm{I}$, and let $C$ be an event such that if it occurs, then all the $\hat{X}_i$ coincide, that is,
\[ C \subseteq \{\hat{X}_i=\hat{X}_j, \text{ for all } i,j \in \mathrm{I}\}.\] Such an event is called a \textit{coupling event}\index{coupling!event}.
\end{definition}
Assume all the $X_i$ take values in a finite or countable set $E$ with $P(X_i=x)=p_i(x)$, for $x \in E$. For all $i,j \in \mathrm{I}$ and $x \in E$, we clearly have
\[P(\hat{X}_i=x, C)=P(\hat{X}_j=x,C)\le p_j(x),  \]
and thus for all $i \in \mathrm{I}$ and $x \in E$,
\[ P(\hat{X}_i=x,C)\le \inf_{j \in \mathrm{I}}p_j(x). \]
Summing over $x \in E$ yields the basic \textit{coupling event inequality}\index{coupling!event!inequality}:
\begin{equation}\label{couplinginequal}
P(C) \le \sum_{x \in E}\, \inf_{j \in \mathrm{I}} \,p_i(x).
\end{equation}
As an example, consider again the case of two discrete random variables $X$ and $X'$ with coupling $(\hat{X}, \hat{X}')$, and set $C=\{\hat{X}=\hat{X}'\}$. Then 
\begin{equation} \label{twodiscrete}
P(\hat{X}=\hat{X}') \le \sum_{x} \min\{P(X=x),P(X'=x)\}. 
\end{equation}
Interestingly, it turns out that, at least in principle, one can always construct a coupling such that the above coupling event inequality (\ref{couplinginequal}) holds with identity. Such a coupling is called \textit{maximal}\index{coupling!maximal} and $C$ a \textit{maximal coupling event}.
\begin{proposition}(Maximal coupling)
Suppose $X_i, i \in \mathrm{I}$, are discrete random variables taking values in a finite or countable set $E$. Then there exists a maximal coupling, that is, a coupling with coupling event $C$ such that
\[ P(C)=\sum_{x \in E}\, \inf_{i \in \mathrm{I}} \,p_i(x). \]
\end{proposition}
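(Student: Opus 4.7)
The plan is a direct construction that saturates the coupling event inequality. Set $c = \sum_{x \in E} \inf_{i \in \mathrm{I}} p_i(x)$. Since $0 \le \inf_{i} p_i(x) \le p_{i_0}(x)$ for any fixed $i_0 \in \mathrm{I}$, we have $c \in [0,1]$. The trivial cases $c = 0$ and $c = 1$ are easy: if $c=0$, take any coupling (e.g., the independent one) and let $C = \emptyset$; if $c=1$, all $p_i$ coincide, so let $Y$ have this common distribution and set $\hat{X}_i = Y$ with $C = \Omega$. Assume henceforth $c \in (0,1)$.

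Now I would build the coupling as a mixture. Define two families of probability mass functions on $E$:
\[
\mu(x) = \frac{\inf_{i \in \mathrm{I}} p_i(x)}{c}, \qquad \nu_i(x) = \frac{p_i(x) - \inf_{j \in \mathrm{I}} p_j(x)}{1-c}, \;\; i \in \mathrm{I}.
\]
Both are legitimate (nonnegative and summing to $1$). On a suitable probability space, introduce mutually independent ingredients: a Bernoulli random variable $B$ with $P(B=1) = c$, a random variable $Y$ with law $\mu$, and a family $(Z_i : i \in \mathrm{I})$ where each $Z_i$ has law $\nu_i$ (the joint law of the $Z_i$ can be chosen arbitrarily, e.g., by quantile coupling or simply independently). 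Define
\[
\hat{X}_i = B \cdot Y + (1-B) \cdot Z_i,
\]
meaning $\hat{X}_i = Y$ when $B=1$ and $\hat{X}_i = Z_i$ when $B=0$.

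Three verifications remain. First, $\hat{X}_i =_d X_i$: by independence,
\[
P(\hat{X}_i = x) = c\,\mu(x) + (1-c)\,\nu_i(x) = \inf_{j} p_j(x) + \bigl(p_i(x) - \inf_{j} p_j(x)\bigr) = p_i(x).
\]
Second, on the event $C = \{B = 1\}$ we have $\hat{X}_i = Y$ for every $i$, so $C \subseteq \{\hat{X}_i = \hat{X}_j \text{ for all } i,j \in \mathrm{I}\}$, i.e., $C$ is a coupling event in the sense of Definition~\ref{couplingevent}. Third, $P(C) = P(B=1) = c = \sum_{x \in E} \inf_{i \in \mathrm{I}} p_i(x)$, which matches the upper bound \eqref{couplinginequal} and hence is maximal.

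The main obstacle is conceptual rather than technical: one must recognize the ``common part / residual'' decomposition $p_i = c\mu + (1-c)\nu_i$, since once this split is available, the mixture construction essentially writes itself. A minor subtlety in the countable case is making sure $\mu$ and the $\nu_i$ are genuine probability distributions, which follows from the definition of $c$ and the fact that the $p_i$ are probability mass functions; and one should state explicitly that the ambient probability space is chosen large enough to support the independent ingredients $B$, $Y$, and $(Z_i)$, which is standard.
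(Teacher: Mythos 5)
Your construction is exactly the splitting representation used in the paper's own proof: the same constant $c$, the same trivial cases, the same normalized ``common part'' $\mu$ and residuals $\nu_i$ (the paper's $V$ and $W_i$), the same Bernoulli switch, and the same verification of the marginals and of $P(C)=c$. The argument is correct and essentially identical to the paper's.
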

\begin{proof}
Put
\[ c :=  \sum_{x \in E} \,\inf_{i \in \mathrm{I}} \,p_i(x) \;\; \text{ (the maximal coupling probability) }.\] If $c=0$, take the $\hat{X}_i$ independent and $C=\emptyset$. If $c=1$, take the $\hat{X}_i$ identical and $C=\Omega$, the sample space. For $0<c<1$, these couplings are mixed as follows:
Let $J$, $V$, and $W_i, ˜i \in \mathrm{I}$, be independent random variables such that $J$ is  Bernoulli distributed with $P(J=1)=c$ and, for $x \in E$, 
\begin{align*}
P(V=x)=&\frac{\inf_{i \in \mathrm{I}}\,p_i(x)}{c}\\
P(W_i=x)=&\frac{p_i(x)-c \,P(V=x)}{1-c}.
\end{align*}
Define, for each $i \in \mathrm{I}$,
\begin{equation}\label{splitting}
\hat{X}_i =\begin{cases}
V, \;\;\text{ if $J=1$,}\\
W_i, \;\;\text{ if $J=0$.}
\end{cases}
\end{equation}
Then
\begin{align*}
P(\hat{X}_i=x) &= P(V=x)P(J=1)+P(W_i=x)P(J=0)\\
			&=P(X_i=x).
\end{align*}
Thus, the $\hat{X}_i$ are a coupling of the $X_i$, $C=\{J=1\}$ is a coupling event, and it has the desired value $c$. 
\end{proof} 
The representation (\ref{splitting}) of the $X_i$ is known as \textit{splitting representation}\index{splitting representation}. 

We conclude the treatment of coupling with a variation on the theme of maximal coupling: Given two random variables, what is a measure of closeness between them when an appropriate coupling is applied to make them as close to being identical as possible?

The \textit{total variation distance}\index{total variation distance} between two probability distributions $\mu$ and $\nu$ on $\Omega$ is defined as
\[ \|\mu-\nu\|_{TV}:=\max_{A \subseteq \Omega} |\mu(A)-\nu(A)|, \]
for all Borel sets $A$. Thus, the distance between $\mu$ and $\nu$ is the maximum difference between the probabilities assigned to a single event by the two distributions. Using the coupling inequality, it can be shown that 
\begin{equation}
\|\mu-\nu\|_{TV}=\inf\{P(X\ne Y) \,|\; (X,Y) \text{ is a coupling of $\mu$ and $\nu$}\}.
\end{equation}
A splitting representation analogous to the one in the previous proof assures that a coupling can be constructed so that the infimum is obtained \citep[see][pp. 50--52]{levin2008}. 
\section{Copulas}
\cite{frechet51} studied the following problem, formulated here for the bivariate case: 
Given the distribution functions $F_X$ and $F_Y$ of two random variables $X$ and $Y$ defined on the same probability space, what can be said about the class $\mathcal{G}(F_X,F_Y)$ of the bivariate distribution functions whose marginals are $F_X$ and $F_Y$?

Obviously, the class $\mathcal{G}(F_X,F_Y)$ is non-empty since it contains the case of $X$ and $Y$  being independent. Let $F(x,y)$ be a joint distribution function for $(X,Y)$.  To each pair of real numbers $(x,y)$, we can associate three numbers: $F_X(x)$, $F_Y(y)$, and $F(x,y)$. Note that each of these numbers lies in the interval $[0,1]$. In other words, each pair $(x,y)$ of real numbers is mapped to a point $(F_X(x),F_Y(y))$ in the unit square $[0,1]\times[0,1]$, and this ordered pair in turn corresponds to a number $F(x,y)$ in $[0,1]$. 
\begin{align*}
(x,y)\mapsto (F_X(x),F_Y(y)) &\mapsto F(x,y) =C(F_X(x),F_Y(y)),\\
\mathbb{R} \times \mathbb{R} \longrightarrow [0,1] \times [0,1] &\overset{C}{\longrightarrow}  [0,1]
\end{align*}
Then the mapping $C$ is an example of a copula (it ``couples'' the bivariate distribution with its marginals).  
\subsection{Definition, Examples, and Sklar's Theorem}

A straightforward definition for any finite dimension $n$ is the following:
\begin{definition}
A function $C: [0,1]^n \longrightarrow [0,1] $ is called $n$-dimensional \linebreak\textit{copula}  if there is a probability space $(\Omega, \mathcal{F},\mathbb{P})$ supporting a vector of standard uniform random variables $(U_1,\ldots,U_n)$ such that
\[ C(u_1,\ldots,u_n)=P(U_1\le u_1,\ldots, U_n\le u_n),\;\; u_1,\ldots,u_n \in [0,1]. \]
\end{definition}
Clearly, any copula is a distribution function. There is an alternative, analytical definition of copula based on the fact that distribution functions can be characterized as functions satisfying certain conditions, without reference to a probability space. 
\begin{definition}
An $n$-dimensional copula $C$\index{copula!$n$-dimensional} is a function on the unit $n$-cube $[0,1]^n$ that  satisfies the following properties:
\begin{enumerate}
\item the range of $C$ is the unit interval $[0,1]$;
\item $C(\mathbf{u})$ is zero for all $\mathbf{u}$ in $[0,1]^n$ for which at least one coordinate is zero (groundedness)\index{groundedness};
\item $C(\mathbf{u})=u_k$ if all coordinates of  $\mathbf{u}$ are $1$ except the $k$-the one;
\item $C$ is \textit{$n$-increasing}, that is, for every $\mathbf{a}\le \mathbf{b}$ in $[0,1]^n$ ($\le$ defined componentwise) the volume assigned by $C$ to the $n$-box $[\mathbf{a},\mathbf{b}]=[a_1,b_1]\times\cdots \times [a_n,b_n]$ is nonnegative.
\end{enumerate}
\end{definition}
One can show that groundedness and the $n$-increasing property are sufficient to define a proper distribution function. Moreover, copulas are \textit{uniformly continuous} and all their \textit{partial derivatives} exist almost everywhere, which is a useful property especially for computer simulations \citep[for proofs see, e.g.,][]{Durante2016}. 

A very simple copula is the following:
\begin{example}[The independence copula]
For independent standard uniform random variables $U_1,\ldots,U_n$ and $\mathbf{U}=(U_1,\ldots,U_n)$
\[ P(\mathbf{U}\le \mathbf{u})= C(u_1,\ldots,u_n)=  \prod_{i=1}^{n}u_i \]
is a copula, called the \textit{independence copula}.
\end{example}
There are two further copulas of special importance:
\begin{example}[The comonotonicity copula]
\label{example:comono}
For $U$ uniformly distributed on $[0,1]$, consider the random vector $\mathbf{U}=(U,\ldots,U)$. Then, for any $\mathbf{u}\in [0,1]^n$,
\[  P(\mathbf{U}\le \mathbf{u})= P(U \le \min\{u_1,\ldots, u_n\})=\min\{u_1,\ldots, u_n\} \] 
is a copula, called the \textit{comonotonicity copula}.
\end{example}
\begin{example}[The countermonotonicity copula]
\label{example:counter}
For $U$ uniformly distributed on $[0,1]$, consider the random vector $\mathbf{U}=(U,1-U)$. Then, for any $\mathbf{u}\in [0,1]^2$,
\[  P(\mathbf{U}\le \mathbf{u})= P(U\le u_1, 1-U \le u_2)= \max\{0,u_1+u_2-1\} \]
is a copula, called the \textit{countermonotonicity copula}.
\end{example}
The comonotonicity copula is often denoted as \[M_n(u_1,\ldots, u_n)=\min\{u_1,\ldots, u_n\}\]
and is also called \textit{upper Fr\'{e}chet-Hoeffding bound copula}. Similarly, the function \[ W_n(u_1,\ldots, u_n)= \max\{u_1+\ldots+u_n - (n-1),0\} \] is called \textit{lower Fr\'{e}chet-Hoeffding bound copula} for $n=2$, but it is not a copula for $n>2$. The reason for the latter statement is that $W_n$ for $n\ge 3$ is in general not a proper distribution function (see below Section~\ref{section:depend}). Importantly, any copula obeys the \textit{Fr\'{e}chet-Hoeffding bounds}:
\begin{theorem}[\cite{frechet51}]\label{frechet51}
If $C$ is any $n$-dimensional copula, then for every $\mathbf{u}\in [0,1]^n$,
\[W_n(\mathbf{u}) \le C(\mathbf{u}) \le M_n(\mathbf{u}).  \]
\end{theorem}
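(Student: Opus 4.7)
The plan is to leverage the probabilistic definition of a copula given earlier, namely $C(u_1,\ldots,u_n) = P(U_1 \le u_1, \ldots, U_n \le u_n)$ for some standard uniform random variables $U_1,\ldots,U_n$ on a common probability space, and handle the two bounds separately using elementary probability inequalities.

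For the upper bound $C(\mathbf{u}) \le M_n(\mathbf{u})$, I would observe that for each fixed $i$, the event $\{U_1 \le u_1, \ldots, U_n \le u_n\}$ is contained in $\{U_i \le u_i\}$, so monotonicity of the probability measure gives $C(\mathbf{u}) \le P(U_i \le u_i) = u_i$. Taking the infimum over $i$ yields $C(\mathbf{u}) \le \min_{i} u_i = M_n(\mathbf{u})$. This is essentially a one-line argument once the defining representation is in place.

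For the lower bound $W_n(\mathbf{u}) \le C(\mathbf{u})$, the nonnegativity piece $C(\mathbf{u}) \ge 0$ is immediate because $C$ is a probability. For the other piece, I would pass to complements and invoke Boole's inequality (the subadditivity of probability):
\[
1 - C(\mathbf{u}) = P\!\left(\bigcup_{i=1}^{n} \{U_i > u_i\}\right) \le \sum_{i=1}^{n} P(U_i > u_i) = \sum_{i=1}^{n}(1-u_i) = n - \sum_{i=1}^{n} u_i.
\]
Rearranging gives $C(\mathbf{u}) \ge \sum_{i=1}^{n} u_i - (n-1)$, and combining this with $C(\mathbf{u}) \ge 0$ gives $C(\mathbf{u}) \ge \max\{\sum_i u_i - (n-1), 0\} = W_n(\mathbf{u})$.

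There is no real obstacle here; both inequalities follow immediately from the definition of $C$ as a joint distribution of uniform marginals. The only conceptual point worth flagging is that the argument bounds the \emph{value} $W_n(\mathbf{u})$ pointwise, which is consistent with the remark preceding the theorem that $W_n$ itself need not be a copula for $n \ge 3$: the lower envelope of copula values is attained pointwise by some (possibly different) copulas at each $\mathbf{u}$, but need not arise from a single joint distribution. Alternatively, one could derive the lower bound from the $n$-increasing property by evaluating the $C$-volume of the box $[\mathbf{u},\mathbf{1}]$ and using inclusion–exclusion together with groundedness, but the Boole's-inequality route is the most transparent.
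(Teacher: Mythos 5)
Your proof is correct and complete; the paper itself omits the argument (deferring to Durante and Sempi, 2016, p.~27), and your derivation---monotonicity of the probability measure for the upper bound, Boole's inequality on the complementary union for the lower bound---is precisely the standard proof given in that reference. Your closing remark also correctly resolves the apparent tension with the fact that $W_n$ is not a copula for $n\ge 3$: the lower bound is pointwise and need not be realized by a single joint distribution.
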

Proof: see, e.g.,\citet{Durante2016}, p.27.

Although the Fr\'{e}chet-Hoeffding lower $W_n$ is never a copula for $n\ge 3$, it is the best possible lower bound in the following sense:
\begin{theorem}
For any $n\ge 3$ and any $\mathbf{u}\in [0,1]^n$, there is an $n$-dimensional copula (which depends on $\mathbf{u}$) such that 
\[ C(\mathbf{u})= W_n(\mathbf{u}) .\]
\end{theorem}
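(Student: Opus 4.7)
The plan is to fix $\mathbf{u}=(u_1,\ldots,u_n)\in[0,1]^n$ and construct, depending on $\mathbf{u}$, a random vector $(U_1,\ldots,U_n)$ of standard uniform marginals on a common probability space whose joint distribution gives a copula $C_{\mathbf{u}}$ satisfying $C_{\mathbf{u}}(\mathbf{u})=W_n(\mathbf{u})$. By the probabilistic definition of copula, it suffices to arrange the events $A_i=\{U_i\le u_i\}$ so that $P(\bigcap_i A_i)=W_n(\mathbf{u})$ while preserving each uniform marginal. I split into two cases according to the sign of $s:=u_1+\cdots+u_n-(n-1)$.

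Case $s>0$. Since $\sum_{i=1}^n(1-u_i)=1-s$, I partition $[0,1)$ into consecutive intervals $J_0=[0,s)$ and $J_i$ of length $1-u_i$ for $i=1,\ldots,n$. Let $U$ be uniform on $[0,1]$. Define each $U_i$ piecewise as a function of $U$: on $J_i$ take a measure-preserving affine map onto $(u_i,1]$, and on the complement $[0,1]\setminus J_i$ (of total Lebesgue measure $u_i$) glue its constituent subintervals and take a measure-preserving affine map onto $[0,u_i]$. Then each $U_i$ is uniform on $[0,1]$ and $A_i^c=\{U_i>u_i\}=J_i$ by construction. The intervals $J_1,\ldots,J_n$ are pairwise disjoint, hence $\bigcap_i A_i=[0,1]\setminus\bigcup_{i=1}^n J_i=J_0$, which has probability $s=W_n(\mathbf{u})$.

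Case $s\le 0$. Here $W_n(\mathbf{u})=0$ and $\sum(1-u_i)\ge 1$, so one can instead choose (now possibly overlapping) subintervals $J_i\subseteq[0,1]$ of length $1-u_i$ whose union exhausts $[0,1]$. The same piecewise construction yields uniform marginals with $\bigcap_i A_i=\emptyset$, and hence $C_{\mathbf{u}}(\mathbf{u})=0=W_n(\mathbf{u})$.

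The main obstacle is the bookkeeping needed to verify that each $U_i$ is globally uniform on $[0,1]$ rather than merely having $P(U_i\le u_i)=u_i$: the measure-preserving affine map on $[0,1]\setminus J_i$ must be specified unambiguously so that the full distribution of $U_i$ is Lebesgue. Once that is in place, combining the equality $C_{\mathbf{u}}(\mathbf{u})=W_n(\mathbf{u})$ with Theorem~\ref{frechet51} shows that $W_n$ is indeed the pointwise infimum over all $n$-copulas, even though for $n\ge 3$ no single copula attains the bound simultaneously at every point.
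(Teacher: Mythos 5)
Your construction is correct: the maps $\phi_i$ sending $J_i$ onto $(u_i,1]$ and its complement onto $[0,u_i]$ by piecewise translations are measure-preserving, so each $U_i=\phi_i(U)$ is standard uniform, $\{U_i\le u_i\}=[0,1]\setminus J_i$ exactly, and the case split on the sign of $s=u_1+\cdots+u_n-(n-1)$ gives $P(\bigcap_i\{U_i\le u_i\})=\max(s,0)=W_n(\mathbf{u})$. The paper does not reproduce a proof but defers to \citet{nelsen99}, p.~48, whose argument is essentially the same: a singular copula, depending on $\mathbf{u}$, whose mass is spread uniformly over line segments, with the two cases $\sum u_i> n-1$ and $\sum u_i\le n-1$ treated separately; so your proposal matches the standard approach.
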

For the proof, see \citet{nelsen99}, p.48. The following famous theorem laid the foundation of many subsequent studies \citep[for a proof, see][Theorem~2.10.9]{nelsen99}.
\begin{theorem}[Sklar's Theorem, 1959]
Let $F(x_1,\ldots, x_n)$ be an $n$-variate distribution function  with margins $F_1(x_1), \ldots, F_n(x_n)$; then there exists an $n$-copula $C:[0,1]^n \longrightarrow [0,1]$ that satisfies
\[F(x_1, \ldots, x_n) = C(F_1(x_1),\ldots, F_n(x_n)), \;\;\; (x_1, \ldots x_n) \in \mathbb{R}^n.\]
If all univariate margins $F_1, \ldots, F_n$ are continuous,  then the copula is unique. Otherwise, $C$ is uniquely determined on $\mathrm{Ran}F_1 \times \mathrm{ Ran}F_2 \times \ldots  \mathrm{Ran}F_n$.

If $F_1^{-1}, \ldots, F_n^{-1}$ are the quantile functions of the margins, then for any $(u_1, \ldots,u_n) \in [0,1]^n$
\[C(u_1, \ldots,u_n)=F(F_1^{-1}(u_1),\ldots,F_n^{-1}(u_n)).\] 
\end{theorem}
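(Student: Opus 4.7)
My plan is to prove existence by an explicit probabilistic construction, read off the closed-form expression in terms of quantile functions as a corollary, and then handle uniqueness separately. I will fix a probability space supporting a random vector $(X_1,\ldots,X_n)$ with joint distribution function $F$ (such a vector always exists by Kolmogorov's extension theorem), apply the probability integral transform componentwise by looking at $(F_1(X_1),\ldots,F_n(X_n))$, and take $C$ to be its joint distribution function. The candidate formula $C(u_1,\ldots,u_n) = F(F_1^{-1}(u_1),\ldots,F_n^{-1}(u_n))$ will then be verified by substitution.

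\textbf{Continuous case.} When each $F_i$ is continuous, $F_i(X_i)$ is uniformly distributed on $[0,1]$, so $C$ has uniform marginals and is a copula in the sense of Definition~6. The Sklar identity
\[C(F_1(x_1),\ldots,F_n(x_n)) = \mathbb{P}\bigl(F_1(X_1) \le F_1(x_1),\ldots, F_n(X_n) \le F_n(x_n)\bigr)\]
then reduces to showing $\{F_i(X_i) \le F_i(x_i)\} = \{X_i \le x_i\}$ up to a $\mathbb{P}$-null set. The nontrivial inclusion uses continuity of $F_i$: the event on which $F_i(X_i) \le F_i(x_i)$ but $X_i > x_i$ lies inside a flat piece of $F_i$ and so has probability zero. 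With the identity in hand, setting $x_i = F_i^{-1}(u_i)$ and using $F_i(F_i^{-1}(u_i)) = u_i$ for continuous $F_i$ yields the explicit quantile formula.

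\textbf{General case.} The discrete or mixed-margin case will be the main obstacle, because atoms destroy the clean identity $F_i(X_i) =_d U$. I would handle it with the \emph{distributional transform}: enlarge the probability space with independent standard uniforms $V_1,\ldots,V_n$ and define
\[U_i := F_i(X_i-) + V_i\bigl(F_i(X_i) - F_i(X_i-)\bigr).\]
A short computation shows $U_i$ is uniform on $[0,1]$ and $F_i^{-1}(U_i) = X_i$ almost surely, so $C$ defined as the joint law of $(U_1,\ldots,U_n)$ is again a copula and satisfies the Sklar identity after rewriting $\{X_i\le x_i\}$ as $\{U_i \le F_i(x_i)\}$, with care taken at the atoms.

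\textbf{Uniqueness.} For continuous margins, $\mathrm{Ran}\,F_i$ is dense in $[0,1]$, so any two copulas $C,C'$ satisfying the Sklar identity must coincide on a dense subset of $[0,1]^n$, and the uniform continuity of copulas (noted right after Definition~6) extends agreement to the whole cube. In the non-continuous case $C$ is pinned down only on $\mathrm{Ran}\,F_1 \times \cdots \times \mathrm{Ran}\,F_n$, and the behaviour on the gaps is genuinely free, which is exactly what the theorem asserts. The trickiest accounting throughout will be the a.s. identification of the events $\{U_i \le F_i(x_i)\}$ with $\{X_i \le x_i\}$ at jump points of the margins; once this is done cleanly for the distributional transform, the rest of the argument is bookkeeping.
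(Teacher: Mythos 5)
Your proposal is correct, but it takes a genuinely different route from the one the paper points to: the paper gives no proof of its own and defers to Nelsen (1999, Theorem 2.10.9), which is the classical analytic argument --- one first shows that the assignment $(F_1(x_1),\ldots,F_n(x_n))\mapsto F(x_1,\ldots,x_n)$ is well defined and defines a unique \emph{subcopula} on $\mathrm{Ran}F_1\times\cdots\times\mathrm{Ran}F_n$ (via the Lipschitz bound $|F(\mathbf{x})-F(\mathbf{y})|\le\sum_i|F_i(x_i)-F_i(y_i)|$), and then extends that subcopula to all of $[0,1]^n$ by multilinear interpolation. Your route --- realize $F$ as the law of a random vector, push each coordinate through the probability integral transform, and, in the presence of atoms, randomize the jumps with the distributional transform $U_i=F_i(X_i-)+V_i\bigl(F_i(X_i)-F_i(X_i-)\bigr)$ --- is the probabilistic proof associated with R\"uschendorf and with Moore and Spruill, and it works: the two lemmas you lean on ($U_i$ standard uniform, and $\{U_i\le F_i(x_i)\}=\{X_i\le x_i\}$ up to null sets) are both true, and your discussion of flat pieces and atoms locates exactly where the care is needed. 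What your approach buys is brevity and a construction that sits naturally beside the coupling material of Section~2: the vector $(U_1,\ldots,U_n)$ is precisely a coupling of standard uniforms whose quantile transforms recover the $X_i$, so Sklar's theorem becomes a statement about quantile coupling. What the analytic proof buys is an explicit copula in the discontinuous case without enlarging the probability space, and the ``uniquely determined on $\mathrm{Ran}F_1\times\cdots\times\mathrm{Ran}F_n$'' clause falls out of subcopula uniqueness rather than being a separate check. One caution: the closing identity $C(u_1,\ldots,u_n)=F(F_1^{-1}(u_1),\ldots,F_n^{-1}(u_n))$ is valid on all of $[0,1]^n$ only when the margins are continuous (off $\prod_i\mathrm{Ran}F_i$ it fails for the copula your construction produces, e.g.\ with Bernoulli margins), so you are right to derive it only in the continuous case; the theorem's wording is loose on this point, and a referee would want you to say so explicitly rather than leave the general-margin case of that formula unaddressed.
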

Copulas with discrete margins have also been defined, but their treatment is less straightforward \cite[for a review, see][]{pfeifer2004}. Sklar's theorem shows that copulas remain invariant under strictly increasing transformations of the underlying random variables. It is possible to construct a wide range of multivariate distributions by separately choosing the marginal distributions and a suitable copula. 
\begin{example}(Bivariate exponential)
For $\delta >0$, the distribution
\[ F(x,y)=\exp\{-[e^{-x}+e^{-y}-(e^{\delta x}+e^{\delta y})^{-1/\delta}]\},\;\;\; -\infty<x,y,<+\infty, \]
with margins $F_1(x)=\exp\{-e^{-x}\}$ and $F_2(y)=\exp\{-e^{-y}\}$ corresponds to the copula 
\[C(u,v)=u v \exp\{[(-\log u)^{-\delta}+(-\log v)^{-\delta}]^{-1/\delta}\},  \]
an example of the class of \textit{bivariate extreme value copulas} characterized by
$C(u^t,v^t)=C^t(u,v)$, for all $t>0$.
\end{example}
\subsection{Copula density and pair copula constructions (vines) }
If the probability measure associated with a copula $C$ is absolutely continuous (with respect to the Lebesgue measure on $[0,1]^n$), then there exists a \textit{copula density} $c: [0,1]^n \longrightarrow [0,\infty]$ almost everywhere unique such that 
\[ C(u_1,\ldots,u_n)= \int\limits_{0}^{u_1} \cdots \int\limits_{0}^{u_n}\, c(v_1,\ldots,v_n) \,dv_n \ldots dv_1, \;\; u_1,\ldots,u_n \in [0,1].\]
Such an absolutely continuous copula is $n$ times differentiable and
\[ c(u_1,\ldots,u_n)= \frac{\partial}{\partial u_1}\cdots  \frac{\partial}{\partial u_n}\,C(u_1,\ldots,u_n) ,\;\; u_1,\ldots,u_n \in [0,1].\]
For example, the independence copula is absolutely continuous with density equal to $1$:
\[\Pi (u_1,\ldots,u_n)=\prod\limits_{k=1}^{n} u_k=\int\limits_{0}^{u_1} \cdots \int\limits_{0}^{u_n} \,1 \,dv_n \ldots dv_1.  \]
When the density of a  distribution $F_{12}(x_1,x_2)$ exists, differentiating  yields 
\[ f_{12}(x_1,x_2)=f_1(x_1)f_2(x_2) \,c_{12}(F_1(x_1),F_2(x_2)) .\]
This equation shows how independence is ``distorted'' by copula density $c$ whenever $c$ is different from $1$. Moreover, this yields an expression for the conditional density of $X_1$ given $X_2=x_2$:
\begin{equation} \label{cond}
 f_{1|2} (x_1| x_2)= c_{12}(F_1(x_1),F_2(x_2)) f_1(x_1)
\end{equation}
This the starting point of a recent, important approach to constructing high-dimensional dependency structures from pairwise dependencies (``vine copulas''). Note  that  a multivariate density of dimension $n$ can be decomposed as follows, here taking the case for $n=3$:
\[ f (x_1,x_2,x_3)= f_{3|12}(x_3|x_1,x_2) f_{2|1}(x_2|x_1) f_1(x_1).\]
Applying the decomposition in Equation~\ref{cond} to each of these terms yields,
\begin{align*}
f_{2|1}(x_2| x_1) &= c_{12}(F_1(x_1), F_2(x_2)) f_2(x_2) \\
f_{3|12}(x_3|x_1,x_2) &= c_{13|2}(F_{1|2}(x_1|x_2),F_{3|2}(x_3|x_2)) f_{3|2}(x_3 |x_2) \\
f_{3|2}(x_3| x_2) &= c_{23}(F_2(x_2), F_3(x_3)) f_3(x_3),
\end{align*}
resulting in the ``regular vine tree'' representation
\begin{align}\label{vinetree}
f(x_1,x_2,x_3) &= f_3(x_3) f_2(x_2)f_1(x_1)  \;\; \text{(marginals)} \\
						& \times c_{12}(F_1(x_1), F_2(x_2)) \cdot c_{23}(F_2(x_2), F_3(x_3)) \; \text{(unconditional pairs)} \notag \\
						& \times c_{13|2}(F_{1|2}(x_1|x_2),F_{3|2}(x_3 |x_2)) \; \text{(conditional pair).} \notag
\end{align}
In order to visualize this structure, in particular for larger $n$, one defines a sequence of trees (acyclic undirected graphs), a simple version of it is depicted in Figure~\ref{vine}.

\begin{figure}[htbp]
\begin{center}
\includegraphics[scale=0.68]{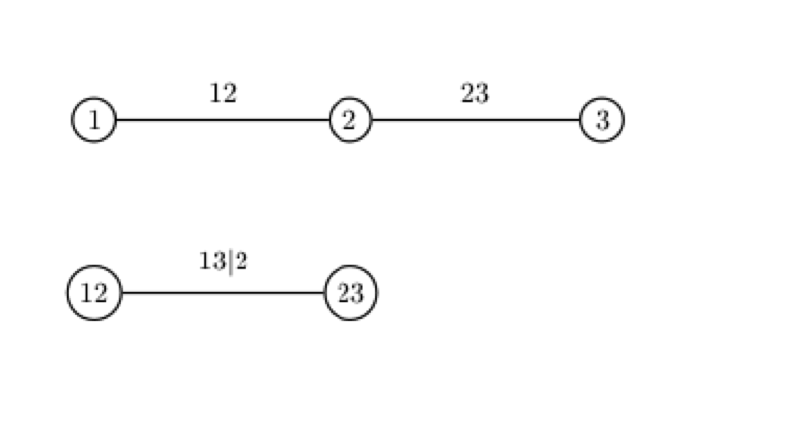}
\caption{Graphical illustration of the decomposition in Equation~\ref{vinetree}. A line in the graph corresponds to the indices of a copula linking two distributions, unconditional in the upper graph, conditional in the lower graph.}
\label{vine}
\end{center}
\end{figure}
\subsection{Survival copula, co-copula, dual and diagonal section of a copula}
Whenever it is more convenient to describe the multivariate distribution of a random vector $(X_1,\ldots,X_n)$ by means of its survival distribution, i.e.,
\[ \hat{F}(x_1,\ldots,x_n)= P(X_1>x_1,\ldots, X_n>x_n),\]
its \textit{survival copula} can be introduced such that the analog of Sklar's theorem holds, with
\[  \hat{F}(x_1,\ldots,x_n)= \hat{C}(\hat{F}_1(x_1),\ldots, \hat{F_n}(x_n)),  \]
where the $\hat{F}_i$, $i=1,\ldots,x_n$, are the marginal survival distributions. In the continuous case there is a one-to-one correspondence between the copula and its survival copula. For $n=2$, this is
\[ C(u_1,u_2)= \hat{C}(1-u_1,1-u_2) +u_1 + u_2-1.\]
For the general case, we refer to \cite[pp. 20-21]{maischerer2012}.

Two other functions closely related to copulas and survival copulas are useful in the response time modeling context. The \textit{dual of a copula $C$} is the function $\tilde{C}$ defined by $\tilde{C}(u,v) = u + v - C(u,v)$ and the \textit{co-copula} is the function $C^*$ defined by $C^*(u,v) =1- C(1- u,1- v)$. Neither of these is a copula, but when $C$ is the (continuous) copula of a pair of random variables $X$ and $Y$, the dual of the copula and the co-copula each express a probability of an event involving $X$ and $Y$:
\[ \tilde{C}(F_X(x),F_Y(y))= P(X\le x \, \text{ or }\, Y\le y) \]
and 
\[ C^*(1-F_X(x),1-F_Y(y)) = P(X > x \, \text{ or }\, Y > y). \]
Finally, for standard uniform random variables $U_1,\ldots, U_n$ and the corresponding copula $C(u_1,\ldots,u_n)$, its \textit{diagnonal section} is defined as $\delta_C(u)=C(u, \ldots, u)$. \citet[][p. 69]{Durante2016} state necessary and sufficient conditions for a function $\delta: [0,1] \rightarrow [0,1]$ to be the diagonal section of some copula.
\subsection{Copulas with singular components}
\label{singular}
If the probability measure associated with a copula $C$ has a singular component, then the copula also has a singular component which can often be detected by finding points $(u_1,\ldots,u_n) \in [0,1]^n$, where some (existing) partial derivative of the copula has a point of discontinuity. A standard example is the comonotonicity copula
\[ M_n(u_1,\ldots,u_n)=\min(u_1,\ldots, u_n),\]
where the partial derivatives have a point of discontinuity;
\[  \frac{\partial}{\partial u_k} M_n(u_1,\ldots,u_n)=\begin{cases} 1 , &\text{if $u_k < \min(u_1,\ldots u_{k-1},u_{k+1},\ldots, u_n)$,}\\
0, &\text{if  $u_k > \min(u_1,\ldots u_{k-1}, u_{k+1},\ldots, u_n)$.} \end{cases} \]
The probability measure associated with $M_n(u_1,\ldots,u_n)$ assigns all mass to the diagonal of the unit $n$-cube $[0,1]^n$ (``perfect positive dependence'').
\subsection{Copulas and extremal dependence}
\label{section:depend}
Here we take a closer look at how copulas relate to stochastic dependency. For $n=2$, Theorem~\ref{frechet51} reduces to 
\begin{example}[Fr\'{e}chet-Hoeffding copula]\label{Hoeffding-copula} %
Let $C(u,v)$ be a $2$-copula; then, for $u, v \in [0,1]$,
\[ W_2(u,v) \equiv \max\{u+v-1,0\} \le C(u,v) \le \min\{u,v\}\equiv M_2(u,v),\]
and $M$ and $W$ are also copulas, the \textit{upper} and \textit{lower}  \textit{Fr\'{e}chet-Hoeffding  copula}.
\end{example}
We have seen (Examples~\ref{example:comono} and \ref{example:counter}) that $M_2$ and $W_2$ are bivariate distribution functions of the random vectors $(U,U)$ and $(U,1-U)$, respectively, where $U$ is a standard uniform random variable. In this case, we say that $M_2$ (comonotonicity copula) describes \textit{perfect positive dependence} and $W_2$ (countermonotonicity copula) describes \textit{perfect negative dependence}. For $U$ and $V$ standard uniform random variables whose joint distribution is the copula $M_2$, then $P(U=V)=1$; and if the copula is $W_2$, then $P(U+V=1)=1$.

If $X$ and $Y$ are random variables with joint distribution function $H(x,y)$ and margins $F_X(x)$ and $F_Y(y)$, then it is easy to show \citep[e.g.] [p. 47]{joe2015} that, for all $x, y \in \mathbb{R},$ 
\begin{equation}\label{f-bound}
\max\{F_X(x)+F_Y(y)-1,0\}\le H(x,y) \le \min\{F_X(x),F_Y(y)\}
\end{equation}
$F^-(x,y)=\max\{F_X(x)+F_Y(y)-1,0\}$ is called the \textit{lower} and $F^+(x,y)=\min\{F_X(x),F_Y(y)\}$ the \textit{upper Fr\'{e}chet-Hoeffding bound}, respectively, or \textit{Fr\'{e}chet bound}, for short. What can be said about random variables $X$ and $Y$ when their joint distribution $H$ equals one of its Fr\'{e}chet-Hoeffding bounds?

If both margins $F_X$ and $F_Y$ are continuous then, by Sklar's theorem, the copulas corresponding to $H$ are unique and the Fr\'{e}chet bounds $F^+$ and $F^-$ represent perfect positive and negative dependence, respectively. In the discrete case, the bounds can also sometimes represent perfect dependence, but not with any generality \citep[see Example 2.9 in] []{joe2015}.
\subsection{Linear measures of dependence}
The most widely known and used dependence measure is Pearson's linear correlation,
\[ \rho(X,Y)= \frac{\mathrm{Cov}(X,Y)}{\sqrt{\mathrm{Var}(X)\mathrm{Var}(X)}}. \]
Given finite variances, it is a measure of linear dependence that takes values in the range $[-1,1]$. An obvious disadvantage of $\rho$ in the context of copulas, in addition to requiring finite variances, is that it depends on the marginal distributions; thus, it is not invariant under strictly increasing nonlinear transformations of the variables, whereas copulas are. \cite{Embrechts2002} mention two ``pitfalls'' in dealing with linear correlation: (1)~assuming that marginal distributions and correlation determine the joint distributions, and (2)~assuming that, given marginal distributions $F_X$ and $F_Y$, all linear correlations between $-1$ and $1$ can be attained through suitable specification of the joint distribution of $X$ and $Y$. Counterexamples to both assumptions abound in the copula literature \citep[e.g.,][]{joe2015,Embrechts2003,Embrechts2002,nelsen99}.

\begin{proposition}[\citealp{hoeffding40}]\label{Prop:hoeffding}
Let $X$ and $Y$ have finite (nonzero) variances with an unspecified dependence structure. Then
\begin{enumerate}
\item the set of possible linear correlations is a closed interval $[\rho_{\min},\rho_{\max}]$ and for the extremal correlations $\rho_{\min} < 0 < \rho_{\max}$ holds;
\item the extremal correlation $\rho=\rho_{\min}$ is
attained if and only if $X$ and $Y$ are countermonotonic; $\rho = \rho_{\max}$ is attained if and only if $X$ and $Y$ are comonotonic.
\item $\rho_{\min}=-1$ if and only if $X$ and $-Y$ are of the same type and $\rho_{\max}=1$ if and only if $X$ and $Y$ are of the same type ($X$ and $Y$ are \emph{of the same type} if we can find $a>0$ and $b\in \mathbb{R}$ so that $Y=_d aX+b$).
\end{enumerate}
\end{proposition}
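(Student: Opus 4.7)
The plan is to build everything on the classical \emph{Hoeffding covariance identity}, which expresses the covariance in terms of the joint distribution $H$ and marginals $F_X,F_Y$:
\[
\mathrm{Cov}(X,Y)=\int_{-\infty}^{\infty}\!\int_{-\infty}^{\infty}\bigl[H(x,y)-F_X(x)F_Y(y)\bigr]\,dx\,dy.
\]
I would first quote/derive this identity (a short Fubini argument applied to the difference of distributions of two independent copies), and then apply the Fréchet--Hoeffding bounds (\ref{f-bound}) pointwise under the integral. Since $F_X$ and $F_Y$ are fixed, so are $\mathrm{Var}(X)$ and $\mathrm{Var}(Y)$; dividing by $\sqrt{\mathrm{Var}(X)\mathrm{Var}(Y)}$ transfers the integral bounds to bounds on $\rho(X,Y)$, with the extremes attained exactly when $H=F^+$ (comonotonic) and $H=F^-$ (countermonotonic). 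That Sklar's theorem guarantees such a joint law with the prescribed margins takes care of attainability, giving one direction of part~2. The converse direction of part~2 follows because if the integral equals its upper (resp.\ lower) bound and the integrand is non-positive (resp.\ non-negative) elsewhere, one deduces $H=F^+$ (resp.\ $F^-$) almost everywhere, hence the pair is comonotonic (resp.\ countermonotonic).

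For part~1, closedness and convexity of $\{\rho(X,Y)\}$ over admissible joint laws follow from the observation that $\lambda C_1+(1-\lambda)C_2$ is again a copula whenever $C_1,C_2$ are, so the family of attainable joint distributions with fixed margins is convex; Hoeffding's identity is linear in $H$, so the set of attainable covariances---and hence of $\rho$'s---is a convex subset of $\mathbb{R}$, i.e.\ an interval. Taking $C=\lambda M_2+(1-\lambda)W_2$ and varying $\lambda\in[0,1]$ gives explicit couplings interpolating continuously between $\rho_{\max}$ and $\rho_{\min}$, so the interval is closed. The strict inequality $\rho_{\min}<0<\rho_{\max}$ comes from noting that $\min\{F_X(x),F_Y(y)\}-F_X(x)F_Y(y)\geq 0$ with strict positivity on a set of positive Lebesgue measure whenever $X,Y$ are non-degenerate (which is implied by the assumed non-zero finite variances), so the upper bound on covariance is strictly positive; the mirror argument handles $\rho_{\min}<0$.

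For part~3, I would exploit the explicit representation of the extremal couplings. Under the comonotonic coupling, $(X,Y)\stackrel{d}{=}(F_X^{-1}(U),F_Y^{-1}(U))$ for a single uniform $U$. The well-known equality condition in the Cauchy--Schwarz inequality states that $\rho=1$ iff $Y=aX+b$ almost surely for some $a>0$. Translating this into the quantile picture, $\rho_{\max}=1$ is equivalent to $F_Y^{-1}(u)=aF_X^{-1}(u)+b$ for a.e.\ $u\in(0,1)$, which is precisely the statement that $X$ and $Y$ are of the same type; the countermonotonic case is symmetric, replacing $U$ by $1-U$ and $Y$ by $-Y$. The only real obstacle is the last step: one must argue carefully that the Cauchy--Schwarz equality condition, which is a statement about a specific joint distribution, combined with the known structure of the comonotonic coupling, is equivalent to an equation relating the two quantile functions up to a positive affine transformation---this is where the notion ``of the same type'' must be matched precisely to almost-everywhere equality of $F_Y^{-1}$ and $aF_X^{-1}+b$, and care is needed at discontinuities of the quantile functions.
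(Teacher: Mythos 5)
Your proposal follows essentially the same route as the paper, which likewise sketches the argument by inserting the Fr\'{e}chet--Hoeffding bounds (Equation~\ref{f-bound}) into Hoeffding's identity (Equation~\ref{hoeff_id}) and uses the mixture $\lambda F^- + (1-\lambda)F^+$ to sweep out the closed interval of attainable correlations, deferring the remaining details to \citet{Embrechts2002}. Your additional elaborations---the convexity/linearity argument for the interval, the positivity of $\min\{F_X,F_Y\}-F_XF_Y$ on a set of positive measure for the strict inequalities, and the quantile-function translation of the Cauchy--Schwarz equality case for part~3---are correct fillings-in of exactly the steps the paper delegates to the cited reference.
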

The proof of this proposition presented in \citealp[][(p. 24)]{Embrechts2002} starts by recalling the Fr\'{e}chet-Hoeffding bounds (Equation~\ref{f-bound}),
\begin{equation*}
F^-=\max\{F_X(x)+F_Y(y)-1,0\}\le H(x,y) \le \min\{F_X(x),F_Y(y)\}=F^+.
\end{equation*}
Inserting the upper and the lower bound into  Hoeffding's identity (e.g. \citealp{shea83})
\begin{equation}\label{hoeff_id}
\mathrm{Cov}(X,Y)=\int\limits_{-\infty}^{+\infty} \int\limits_{-\infty}^{+\infty}\left[H(x,y)-F_X(x)F_Y(y)\right]\,dx\,dy ,
\end{equation}
immediately gives the set of possible correlations (see \citealt[][(ibid.)]{Embrechts2002} for the complete proof of the proposition). 

With $0\le \lambda \le 1$, the mixture $\lambda F^- +(1-\lambda) F^+$ has correlation \[\rho=\lambda \rho_{\min} +(1-\lambda) \rho_{\max}\] and can be used to construct joint distributions with marginals $F_X$ and $F_Y$ and with arbitrary correlations $\rho \in [\rho_{\min},\rho_{\max}]$.

Finally, the following example shows that small (linear) correlations cannot be interpreted as implying weak dependence between random variables.
\begin{example}[\citet{Embrechts2002}]
Let $X$ be distributed as $\mathrm{Lognormal}(0,1)$ and $Y$ as $\mathrm{Lognormal}(0,\sigma^2)$, $\sigma >0$. Note that $X$ and $Y$ are not of the same type although $\log X$ and $\log Y$ are. From Proposition~\ref{Prop:hoeffding}, one obtains
\[ \rho_{\min}=\frac{e^{-\sigma}-1}{\sqrt{(e-1)(e^{\sigma^2}-1)}} \] and 
\[ \rho_{\max}=\frac{e^{\sigma}-1}{\sqrt{(e-1)(e^{\sigma^2}-1)}}. \]
Letting $\sigma \rightarrow \infty$, both correlations converge to zero. 
\end{example}
Thus, it is possible to have a random vector $(X,Y)$ where the correlation is almost zero, even though $X$ and $Y$ are comonotonic or countermonotonic and therefore have the strongest kind of dependency possible.
\subsection{Copula-based measures of dependence}
There are several alternatives to the linear correlation coefficient when the latter is inappropriate or misleading. Two important ones are \textit{Kendall's tau} and \textit{Spearman's rho}. 

For a vector $(X,Y)$, Kendall's tau is defined as
\[ \tau(X,Y)=P[(X-\tilde{X})(Y-\tilde{Y})>0]-P[(X-\tilde{X})(Y-\tilde{Y})<0], \]
where $(\tilde{X},\tilde{Y})$ is an independent copy of $(X,Y)$. Hence Kendall's tau is simply the probability of concordance minus the probability of discordance. When $(X,Y)$ is continuous with copula $C$, then it can be shown that
\[ \tau(X,Y)=\tau_C = 4 \iint\limits_{[0,1]^2} C(u,v)dC(u,v) -1. \]
Note that the integral above can be interpreted as the expected value of the function $C(U,V)$ of the standard uniform  random variables $U$ and $V$ whose joint distribution is $C$, i.e., 
\[\tau_C = 4\,\mathrm{E}[C(U,V)]-1. \]
In the case of  three known bivariate distribution functions, Kendall's tau  gives a necessary condition for compatibility, i.e., for the existence of a trivariate distribution function with the given bivariate marginals. Specifically,
\begin{proposition}\cite[][p. 76]{joe1997}
Let $F \in \mathcal{G}(F_{12},F_{13},F_{23})$, the class of trivariate distributions with marginals $F_{12},F_{13},F_{23}$ and suppose $F_{jk}$, $j<k$, are continuous. Let $\tau_{jk}=\tau_{kj}$ be the value of Kendall's tau for $F_{jk}$, $j\ne k$. Then the inequality
\[ -1+|\tau_{ij}+\tau_{jk}| \le \tau_{ik} \le 1-|\tau_{ij}-\tau_{jk}|\]
holds for all permutations $(i,j,k)$ of $(1,2,3)$ and the bounds are sharp.
\end{proposition}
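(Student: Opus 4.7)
My plan is to reformulate Kendall's tau in terms of signs of concordance for independent copies, exploit a deterministic multiplicative identity among those signs to collapse the problem onto a four-point probability vector, and then read off the inequalities as the nonnegativity constraints of that vector. Sharpness will require a separate construction using the Fr\'echet-Hoeffding copulas.

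First I would let $(X_1,X_2,X_3)$ and $(X_1',X_2',X_3')$ be independent copies of a random vector distributed according to some $F\in\mathcal{G}(F_{12},F_{13},F_{23})$. Because each $F_{jk}$ is continuous, its univariate margins are continuous and hence $P(X_j = X_j')=0$ for each $j$. The sign $S_{jk} := \mathrm{sign}\bigl((X_j - X_j')(X_k - X_k')\bigr)$ therefore takes values in $\{-1,+1\}$ almost surely, and $\tau_{jk} = \mathrm{E}[S_{jk}]$ by the definition recalled just before the proposition.

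The key observation is the deterministic identity
\[ S_{12}\, S_{13}\, S_{23} = \mathrm{sign}(X_1-X_1')^{2}\,\mathrm{sign}(X_2-X_2')^{2}\,\mathrm{sign}(X_3-X_3')^{2} = +1 \quad \text{a.s.} \]
so the triple $(S_{12},S_{13},S_{23})$ is supported on the four sign patterns $(+,+,+),(+,-,-),(-,+,-),(-,-,+)$; denote their probabilities by $p_1,\ldots,p_4$. Solving the $4\times 4$ linear system given by $\tau_{jk}=\mathrm{E}[S_{jk}]$ for $(j,k)\in\{(1,2),(1,3),(2,3)\}$ together with $\sum_{i=1}^{4}p_i=1$ gives
\[ p_1 = \tfrac14(1+\tau_{12}+\tau_{13}+\tau_{23}),\qquad p_2 = \tfrac14(1+\tau_{12}-\tau_{13}-\tau_{23}), \]
\[ p_3 = \tfrac14(1-\tau_{12}+\tau_{13}-\tau_{23}),\qquad p_4 = \tfrac14(1-\tau_{12}-\tau_{13}+\tau_{23}). \]
The four constraints $p_i\ge 0$ split, after fixing any pair $(\tau_{ij},\tau_{jk})$, into two lower and two upper bounds on $\tau_{ik}$; taking the pointwise maximum and minimum and recognising $\max(a,-a)=|a|$ produces exactly $-1 + |\tau_{ij}+\tau_{jk}| \le \tau_{ik} \le 1 - |\tau_{ij}-\tau_{jk}|$. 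Symmetry of the system in the indices then handles every permutation of $(1,2,3)$.

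For sharpness I would exhibit, for each endpoint, a trivariate distribution whose triple of Kendall taus sits on the boundary of the region above. The Fr\'echet-Hoeffding copulas $M_2$ (comonotonic, $\tau=+1$) and $W_2$ (countermonotonic, $\tau=-1$) give the deterministic endpoints immediately: the joint law of $(U,U,U)$ yields $(\tau_{12},\tau_{13},\tau_{23})=(1,1,1)$, which saturates both sides for $(i,j,k)=(1,2,3)$, and $(U,U,1-U)$ saturates the lower bound. Interior sharpness follows by mixing such deterministic trivariate strata with the independence copula in proportions tuned to the prescribed $(\tau_{ij},\tau_{jk})$, as in the mixture argument immediately following Proposition~\ref{Prop:hoeffding}. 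The main obstacle I anticipate is verifying that such a mixing can be arranged so that the resulting trivariate law actually has the prescribed \emph{bivariate marginals} $F_{12},F_{13},F_{23}$, and not just the correct Kendall taus; here the continuity hypothesis is essential, because it allows one to appeal to the uniqueness clause of Sklar's theorem to promote an agreement of copulas into an agreement of joint distributions, pinning down membership in $\mathcal{G}(F_{12},F_{13},F_{23})$.
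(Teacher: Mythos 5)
Your proof of the inequality itself is correct, and it is essentially the standard argument for this result (the paper does not reproduce a proof but defers to Joe, 1997, which argues in just this way): continuity of the $F_{jk}$ rules out ties, the deterministic identity $S_{12}S_{13}S_{23}=+1$ a.s.\ confines $(S_{12},S_{13},S_{23})$ to four sign patterns, the linear system for $(p_1,\ldots,p_4)$ inverts as you state, and the four constraints $p_i\ge 0$ are exactly the claimed bounds. That part needs no repair.

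The gap is in your sharpness argument. You propose to reach arbitrary boundary points by mixing deterministic strata with the independence copula ``as in the mixture argument immediately following Proposition~\ref{Prop:hoeffding}.'' That argument works for Pearson's correlation because, by Hoeffding's identity (Equation~\ref{hoeff_id}), covariance is a \emph{linear} functional of the joint distribution with fixed margins, so the $\lambda$-mixture of $F^-$ and $F^+$ has correlation $\lambda\rho_{\min}+(1-\lambda)\rho_{\max}$. Kendall's tau is not linear in the copula: $\tau_C=4\int C\,dC-1$ is a quadratic functional, and for $C=\lambda C_1+(1-\lambda)C_2$ one gets $\tau_C=\lambda^2\tau_{C_1}+(1-\lambda)^2\tau_{C_2}+2\lambda(1-\lambda)\bigl(4\int C_1\,dC_2-1\bigr)$, with a cross term your construction does not control; tuning $\lambda$ therefore does not let you prescribe $(\tau_{ij},\tau_{jk})$ and simultaneously drive $\tau_{ik}$ to an endpoint. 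The paper's (and Joe's) route is different and does close this step: take the trivariate normal family, for which $\tau=(2/\pi)\arcsin\rho$; the boundary of the set of admissible correlation matrices, $\rho_{13}=\rho_{12}\rho_{23}\pm\sqrt{(1-\rho_{12}^2)(1-\rho_{23}^2)}$, maps under this transformation onto the curves $\tau_{13}=-1+|\tau_{12}+\tau_{23}|$ and $\tau_{13}=1-|\tau_{12}-\tau_{23}|$, giving attainability for every admissible pair $(\tau_{ij},\tau_{jk})$ at once. Your closing worry about reproducing the prescribed bivariate margins is also somewhat beside the point: sharpness here means the bounds cannot be improved over the class of all compatible triples of continuous bivariate margins, not that they are attained within $\mathcal{G}(F_{12},F_{13},F_{23})$ for every fixed choice of margins.
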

Thus, if the above inequality does not hold for some $(i,j,k)$, then the three bivariate margins are not compatible. Sharpness follows from the special trivariate normal case: Kendall's tau for the bivariate normal is $\tau=(2/\pi)\arcsin(\rho)$, so that the inequality becomes
\[ -\cos(\tfrac{1}{2}\pi (\tau_{12}+\tau_{23}))\le \sin(\tfrac{1}{2} \pi \tau_{13})\le \cos(\tfrac{1}{2} \pi(\tau_{12}-\tau_{23})), \]
with $(i,j,k)=(1,2,3)$.

Let us now turn to Spearman's rho. For three independent vectors $(X_1,Y_1), (X_2,Y_2)$ and $(X_3,Y_3)$ with a common joint distribution $H$ (whose margins are $F$ and $G$), Spearman's rho is is defined as 
\[\rho(X,Y)= 3 \{ P[(X_1-X_2)(Y_1-Y_3)>0]-P[(X_1-X_2)(Y_1-Y_3)<0] \}. \]
Note that $(X_1,Y_1)$ and $(X_2,Y_3)$ is pair of vectors with the same marginals, but $(X_1,Y_1)$ has distribution function $H$, while the components of $(X_2,Y_3)$ are independent. For $X$ and $Y$ continuous with copula $C$, one can show \citep{nelsen99} that
\[ \rho(X,Y)=\rho_C=12\,\iint\limits_{[0,1]^2} uv\; dC(u,v) -3. \]
and, moreover, that Spearman's rho is identical to the linear correlation coefficient between the random variables $U=F(X)$ and $V=G(Y)$:
\begin{align*}
\rho(X,Y)=\rho_C  = & 12 \;\mathrm{E}[UV]-3\\
                = & \frac{\mathrm{E}[UV]-1/4}{1/12}=\frac{\mathrm{E}[UV]-\mathrm{E}[U]\mathrm{E}[V]}{\sqrt{\mathrm{Var}[U]}\sqrt{\mathrm{Var}[V]}}.
\end{align*}
Finally, the relation between Kendall's tau and Spearman's rho has been investigate early on. It has been shown \cite[e.g.][]{kruskal58}  that always 
\[-1 \le 3 \tau -2 \rho \le 1,\]
where $\tau$ is Kendall's tau and $\rho$ is Spearman's rho.
\section{Example application: multisensory modeling}
While the applications discussed here refer to the context of multisensory modeling, similar examples can more generally be found in any context where processing of information takes place in two or more channels (e.g. the visual search paradigm mentioned in the introduction). Moreover, we do not strive to be exhaustive in presenting multisensory applications, nor do we try to treat them in depth. Rather, our goal is to encourage further work in an area that is quickly gaining importance in psychology, neuroscience, and related areas.

In the behavioral version of the \textit{multisensory paradigm}, one distinguishes two different conditions:
\begin{quote}
\textit{Unimodal} condition: a stimulus of a single modality (visual, auditory, tactile) is presented and the participant is (i)~asked to respond (e.g., by button press or eye movement) as quickly as possible upon detecting the stimulus (\textit{reaction time task}), or (ii)~to indicate whether or not a stimulus of that modality was detected (\textit{detection task}). 
\end{quote}

\begin{quote}
\textit{Bi- or trimodal} condition: stimuli from two or three modalities are presented (nearly) simultaneously and the participant is (i)~asked to respond as quickly as possible upon detecting a stimulus of any modality (\textit{redundant signals RT task}), or (ii)~ to indicate whether or not a stimulus of any modality was detected (\textit{redundant signals detection task})
\end{quote}
We refer to $\mathcal{V}, \mathcal{A}, \mathcal{T}$ as the unimodal context where visual, auditory, or tactile stimuli are presented, resp. Simlarly, $\mathcal{VA}$ denotes a bimodal (visual-auditory) context, etc. For each stimulus, or crossmodal stimulus combination, we observe samples from a random variable representing the reaction time measured in any given trial. Let $F_V(t), F_A(t), F_{VA}(t)$ denote the (theoretical) distribution functions of reaction time in a unimodal visual, auditory, or a visual-auditory context, respectively, when a specific stimulus (combination) is presented\footnote{For simplicity, we write $F_V(t)$, etc., instead of $F_\mathcal{V}(t)$.}. Analogously, we define the probabilities for indicator functions in a detection task: $p_V=P(detection|V)$, $p_A=P(detection|A)$, and  $p_{VA}=P(detection|VA)$.

Note that, from a modeling point of view, each context $\mathcal{V}, \mathcal{A}$, or $\mathcal{VA}$ refers to a different sample space and $\sigma$-algebra. Therefore,  no probabilistic coupling between the (reaction time) random variables in these different conditions necessarily exist. A common assumption, often not stated explicitly, is that there does exist a coupling between visual and auditory RT, for example, such that the margins of the coupling, i.e., of a bivariate distribution $\hat{H}_{VA}$, are equal to the distributions $F_V$ and $F_A$. 

Assuming such a coupling exists, a multisensory model should specify how $F_{VA}$ relates to the bivariate distribution $\hat{H}_{VA}$. In principle, this could be tested empirically. However, in the multisensory RT paradigm described above, the marginals of $\hat{H}_{VA}$ are not observable, only the distribution function of RTs in the bimodal context, $F_{VA}$, is. Therefore, testing for the existence of a coupling requires an additional assumption about how $\hat{H}_{VA}$ and $F_{VA}$ are related. The model studied most often is the so-called \textit{race model}.
\begin{example}[The race model] Let $V$ and $A$ be the random reaction times in unimodal conditions $\mathcal{V}$ and $\mathcal{A}$, with distribution functions $F_V(t)$ and $F_A(t)$, resp. Assume a coupling exists, i.e., a bivariate distribution function $\hat{H}_{VA}$ for $(\hat{V},\hat{A})$ such that
 \[V=_d \hat{V} \mbox{ and } A=_d \hat{A};\] 
assume bimodal RT is determined by the ``winner'' of the race between the modalities:
\begin{center}
$F_{VA}(t)=P(\hat{V}\le t \mbox{ or } \hat{A} \le t)$
\end{center}
Then  
\begin{equation}\label{race}
F_{VA}(t)= F_V(t)+F_A(t) - \hat{H}_{VA}(t,t).
\end{equation} 
\end{example}
\noindent The function $\hat{H}_{VA}(s,t)=C(F_V(s),F_A(t))$ is clearly a copula and, from Sklar's theorem, it is unique assuming continuous unimodal distribution functions. Moreover, we have the upper and lower Fr\'{e}chet copulas (Example~\ref{Hoeffding-copula}) such that
\begin{equation}
\max\{F_V(s)+F_A(t)-1,0\}\le \hat{H}_{VA}(s,t) \le \min\{F_V(s),F_A(t)\}.
\end{equation}
Taking the diagonal sections of these copulas (i.e., setting $s=t$ throughout), inserting $F_{VA}(t)$ from Equation~\ref{race}, and rearranging yields the ``race model inequality'' \citep{miller82}:
\[\max\{F_V(t),F_A(t)\}\le F_{VA}(t) \le \min\{F_V(t)+F_A(t),1\}, \;t\ge 0.\]
The upper bound corresponds to maximal negative dependence between $\hat{V}$ and $\hat{A}$, the lower bound to maximal positive dependence. Empirical violation of the upper bound (occurring only for small enough $t$) is interpreted as evidence against the race mechanism (``bimodal RT faster than predictable from unimodal conditions''), but it may also be evidence against the coupling assumption\citep{colonius06}.

\begin{example}[Time window of integration model \citealt{Colonius2004}]
The time window of integration (\emph{TWIN}) model distinguishes a first stage where unimodal neural activations race against each other, and a subsequent stage of converging processes that comprise neural integration of the input and preparation of a response. Multisensory integration occurs only if all peripheral processes of the first stage terminate within a given temporal interval, the ``time window of integration''.  Total reaction time in the crossmodal (visual-auditory) condition is the sum of first and second stage processing times: 
\begin{equation}\label{sum}
RT_{VA} = W_1 + W_2,
\end{equation}
\noindent where $W_1$ and $W_2$ are two random variables on the same probability space. Letting $I$ denote the random event that integration occurs and $I^{c}$  its complement, $W_{1}$ and $W_{2}$ are assumed to be \emph{conditionally independent}, conditioning on either $I$ or $I^{c}$. This implies that any dependency between the processing stages is solely generated by the event of integration or its complement. The distribution of the pair $(W_1,W_2)$ then is
\begin{equation}\label{bivar}
H(w_{1},w_{2})=\pi H_{I}(w_{1},w_{2})+(1-\pi) H_{I^{C}}(w_{1},w_{2}),
\end{equation}
where $H_{I}$ and $H_{I^{C}}$ denote the conditional distributions of $W_{1}$ and $W_{2}$ with respect to $I$ and $I^{c}$, respectively, and $\pi=P(I)$. 
\end{example}
By conditional independence, $H_{I}$ and $H_{I^{C}}$ can be written as products of their marginal distributions,
\[H_{I}(w_{1},w_{2})=F_{I}(w_{1}) G_{I}(w_{2}) \mbox{ and } H_{I^{C}}(w_{1},w_{2})=F_{I^{C}}(w_{1}) G_{I^{C}}(w_{2}), \]
where $F$ and $G$ refer to the first and second stage (conditional) distributions, respectively. Inserting into \emph{Equation~\ref{bivar}} yields
\begin{equation}
\label{mix}
H(w_{1},w_{2})=\pi F_{I}(w_{1}) G_{I}(w_{2}) +(1-\pi) F_{I^{C}}(w_{1}) G_{I^{C}}(w_{2}).
\end{equation}
\noindent The covariance between $W_1$ and $W_2$, computed using Hoeffding's identity (Equation~\ref{hoeff_id}), equals 
\begin{equation}
\mathrm{Cov}(W_1,W_2)=\pi(1-\pi)\{\mathrm{E}(W_1|I^C)-\mathrm{E}(W_1|I)\}\{\mathrm{E}(W_2|I^C)-\mathrm{E}(W_2|I)\},
\end{equation}
showing that the dependence between the stage processing times can be positive, negative,  or zero\footnote{Investigation of Kendall's tau and Spearman's rho for this model will be pursued elsewhere.}.
\section*{Bibliographic and historical notes}
The origins of the theory of probabilistic coupling have been traced back to the work of Wolfgang D\"oblin\footnote{Son of Alfred D\"oblin (1878-1957), an important German writer well known for his novel \emph{Berlin Alexanderplatz}.} in the late 1930s \citep[see][]{doeblin1938,Lindvall1991}, but interest in the theory waxed and waned for a long time, and it has only recently become part of some standard texts in probability theory \citep[e.g.][]{gut13}. For an advanced treatment of coupling theory see \cite{lindvall2002,thorisson2000,jaworski2010}, and \citet{levin2008}.

According to \cite{Durante2010,Durante2016}, the history of copulas may  be traced back to \cite{frechet51}. However, the term \textit{copula} and the theorem bearing his name was introduced by Abe Sklar \citep{Sklar1959}. Comprehensive treatments of copula theory are  \cite{joe1997,joe2015,Durante2016,nelsen99}, \cite{maischerer2012} focus on the simulation aspects, and \cite{denuit2005} and \cite{ruesch2013} emphasize the actuarial and financial risks background of the theory. A compact and application oriented introduction is \cite{Trivedi2005}. 

To our knowledge, the first application of the concepts of coupling and copula (without using those terms) to reaction time modeling is \cite{colonius1990}. Early investigations of the race model inequality include \cite{miller82,Ulrich1986,Colonius1986,Diederich1987, Diederich1992}. An application of coupling/copula concepts to multisensory detection is \cite{colonius2015}.

\section*{Acknowledgments}
This work was supported by DFG (German Science Foundation) SFB/TRR-31 (Project B4) and the  DFG Cluster of Excellence EXC 1077/1 “Hearing4all.” 

\section*{References}

\bibliography{colonius_cop-refs}

\begin{thebibliography}{35}
\expandafter\ifx\csname natexlab\endcsname\relax\def\natexlab#1{#1}\fi
\providecommand{\url}[1]{\texttt{#1}}
\providecommand{\href}[2]{#2}
\providecommand{\path}[1]{#1}
\providecommand{\DOIprefix}{doi:}
\providecommand{\ArXivprefix}{arXiv:}
\providecommand{\URLprefix}{URL: }
\providecommand{\Pubmedprefix}{pmid:}
\providecommand{\doi}[1]{\href{http://dx.doi.org/#1}{\path{#1}}}
\providecommand{\Pubmed}[1]{\href{pmid:#1}{\path{#1}}}
\providecommand{\bibinfo}[2]{#2}
\ifx\xfnm\relax \def\xfnm[#1]{\unskip,\space#1}\fi
\bibitem[{Colonius(1986)}]{Colonius1986}
\bibinfo{author}{Colonius, H.}, \bibinfo{year}{1986}.
\newblock \bibinfo{title}{Measuring channel dependence in separate activation
  models}.
\newblock \bibinfo{journal}{Perception and Psychophysics} \bibinfo{volume}{40},
  \bibinfo{pages}{251--255}.
\bibitem[{Colonius(1990)}]{colonius1990}
\bibinfo{author}{Colonius, H.}, \bibinfo{year}{1990}.
\newblock \bibinfo{title}{Possibly dependent probability summation of reaction
  time}.
\newblock \bibinfo{journal}{Journal of Mathematical Psychology}
  \bibinfo{volume}{34}, \bibinfo{pages}{253--275}.
\bibitem[{Colonius(2015)}]{colonius2015}
\bibinfo{author}{Colonius, H.}, \bibinfo{year}{2015}.
\newblock \bibinfo{title}{Behavioral measures of multisensory integration:
  bounds on bimodal detection probability}.
\newblock \bibinfo{journal}{Brain Topography} \bibinfo{volume}{28},
  \bibinfo{pages}{1--4}.
\bibitem[{Colonius and Diederich(2004)}]{Colonius2004}
\bibinfo{author}{Colonius, H.}, \bibinfo{author}{Diederich, A.},
  \bibinfo{year}{2004}.
\newblock \bibinfo{title}{Multisensory interaction in saccadic reaction time: a
  time- window-of-integration model}.
\newblock \bibinfo{journal}{Journal of Cognitive Neuroscience}
  \bibinfo{volume}{16}, \bibinfo{pages}{1000--1009}.
\bibitem[{Colonius and Diederich(2006)}]{colonius06}
\bibinfo{author}{Colonius, H.}, \bibinfo{author}{Diederich, A.},
  \bibinfo{year}{2006}.
\newblock \bibinfo{title}{The race model inequality: interpreting a geometric
  measure of the amount of violation}.
\newblock \bibinfo{journal}{Psychological Review} \bibinfo{volume}{113},
  \bibinfo{pages}{148--154}.
\bibitem[{Denuit et~al.(2005)Denuit, Dhaene, Goovaerts and Kaas}]{denuit2005}
\bibinfo{author}{Denuit, M.}, \bibinfo{author}{Dhaene, J.},
  \bibinfo{author}{Goovaerts, M.}, \bibinfo{author}{Kaas, R.},
  \bibinfo{year}{2005}.
\newblock \bibinfo{title}{Actuarial theory for dependent risks}.
\newblock \bibinfo{publisher}{John Wiley and Sons, Ltd}.
\bibitem[{Diederich(1992)}]{Diederich1992}
\bibinfo{author}{Diederich, A.}, \bibinfo{year}{1992}.
\newblock \bibinfo{title}{Probability inequalities for testing separate
  activation models of divided attention}.
\newblock \bibinfo{journal}{Perception and Psychophysics} \bibinfo{volume}{52},
  \bibinfo{pages}{714--716}.
\bibitem[{Diederich and Colonius(1987)}]{Diederich1987}
\bibinfo{author}{Diederich, A.}, \bibinfo{author}{Colonius, H.},
  \bibinfo{year}{1987}.
\newblock \bibinfo{title}{Intersensory facilitation in the motor component? {A}
  reaction time analysis}.
\newblock \bibinfo{journal}{Psychological Research} \bibinfo{volume}{49},
  \bibinfo{pages}{23--29}.
\bibitem[{D\"oblin(1938)}]{doeblin1938}
\bibinfo{author}{D\"oblin, W.}, \bibinfo{year}{1938}.
\newblock \bibinfo{title}{Expos\'{e} de la th\'{e}orie des cha\^{i}nes simples
  constantes de {Markoff} \`{a} un nombre fini d' \'{e}tats.}
\newblock \bibinfo{journal}{Revues Math\'{e}matiques de l'Union
  Interbalkanique} \bibinfo{volume}{2}, \bibinfo{pages}{77--105}.
\bibitem[{Durante and Sempi(2010)}]{Durante2010}
\bibinfo{author}{Durante, F.}, \bibinfo{author}{Sempi, C.},
  \bibinfo{year}{2010}.
\newblock \bibinfo{title}{Copula theory: An introduction}, in:
  \bibinfo{editor}{Jaworski, P.}, \bibinfo{editor}{Durante, F.},
  \bibinfo{editor}{H\"ardle, W.}, \bibinfo{editor}{Rychlik, T.} (Eds.),
  \bibinfo{booktitle}{Copula Theory and its applications: Proceedings of the
  workshop held in Warsaw, 25-26 September 2009}, pp. \bibinfo{pages}{3--31}.
\bibitem[{Durante and Sempi(2016)}]{Durante2016}
\bibinfo{author}{Durante, F.}, \bibinfo{author}{Sempi, C.},
  \bibinfo{year}{2016}.
\newblock \bibinfo{title}{Principles of copula theory}.
\newblock \bibinfo{publisher}{CRC Press}, \bibinfo{address}{Boca Raton, FL}.
\bibitem[{Dzhafarov and Kujala(in press)}]{Dzhafarovinpress}
\bibinfo{author}{Dzhafarov, E.}, \bibinfo{author}{Kujala, J.},
  \bibinfo{year}{in press}.
\newblock \bibinfo{title}{Probabilistic contextuality}.
\newblock \bibinfo{journal}{Journal of Mathematical Psychology}
  \bibinfo{volume}{this issue}.
\bibitem[{Embrechts et~al.(2003)Embrechts, Lindskog and McNeil}]{Embrechts2003}
\bibinfo{author}{Embrechts, P.}, \bibinfo{author}{Lindskog, F.},
  \bibinfo{author}{McNeil, A.}, \bibinfo{year}{2003}.
\newblock \bibinfo{title}{Modelling dependence with copulas and applicatiions
  to risk management}. \bibinfo{publisher}{Elsevier}.
  chapter~\bibinfo{chapter}{8}.
\newblock pp. \bibinfo{pages}{329--384}.
\bibitem[{Embrechts et~al.(2002)Embrechts, McNeil and
  Straumann}]{Embrechts2002}
\bibinfo{author}{Embrechts, P.}, \bibinfo{author}{McNeil, A.},
  \bibinfo{author}{Straumann, D.}, \bibinfo{year}{2002}.
\newblock \bibinfo{title}{Risk management: Value at Risk and Beyond}.
  \bibinfo{publisher}{Cambridge University Press},
  \bibinfo{address}{Cambridge}. chapter \bibinfo{chapter}{Correlation and
  dependence in risk management: properties and pitfalls}.
\newblock pp. \bibinfo{pages}{176--223}.
\bibitem[{Frech\'{e}t(1951)}]{frechet51}
\bibinfo{author}{Frech\'{e}t, M.}, \bibinfo{year}{1951}.
\newblock \bibinfo{title}{Sur les tableaux de corr\'{e}lation dont les marges
  sont donn\'{e}e}.
\newblock \bibinfo{journal}{Ann. Univ. Lyon Sect. A} \bibinfo{volume}{14},
  \bibinfo{pages}{53--77}.
\bibitem[{Gut(2013)}]{gut13}
\bibinfo{author}{Gut, A.}, \bibinfo{year}{2013}.
\newblock \bibinfo{title}{Probability: A graduate course}.
\newblock \bibinfo{edition}{Second} ed., \bibinfo{publisher}{Springer},
  \bibinfo{address}{New York, NY}.
\bibitem[{Hoeffding(1940)}]{hoeffding40}
\bibinfo{author}{Hoeffding, W.}, \bibinfo{year}{1940}.
\newblock \bibinfo{title}{Ma{\ss}stabinvariante korrelationstheorie}.
\newblock \bibinfo{journal}{Schriften des Mathematischen Instituts und des
  Instituts f{\"u}r Angewandte Mathematik der Universit{\"a}t Berlin}
  \bibinfo{volume}{5}, \bibinfo{pages}{181--233}.
\bibitem[{Jaworski et~al.(2010)Jaworski, Durante, H\"ardle and
  Rychlik}]{jaworski2010}
\bibinfo{editor}{Jaworski, P.}, \bibinfo{editor}{Durante, F.},
  \bibinfo{editor}{H\"ardle, W.}, \bibinfo{editor}{Rychlik, T.} (Eds.),
  \bibinfo{year}{2010}.
\newblock \bibinfo{title}{Copula Theory and its applications: Proceedings of
  the workshop held in Warsaw, 25-26 September 2009}. volume
  \bibinfo{volume}{198} of \textit{\bibinfo{series}{Lecture Notes in Statistics
  -- Proceedings}}, \bibinfo{publisher}{Springer}.
\bibitem[{Joe(1997)}]{joe1997}
\bibinfo{author}{Joe, H.}, \bibinfo{year}{1997}.
\newblock \bibinfo{title}{Multivariate models and dependence concepts}.
\newblock \bibinfo{publisher}{Chapman \& Hall}, \bibinfo{address}{London, UK}.
\bibitem[{Joe(2015)}]{joe2015}
\bibinfo{author}{Joe, H.}, \bibinfo{year}{2015}.
\newblock \bibinfo{title}{Dependence modeling with copulas}.
\newblock \bibinfo{publisher}{CRC Press, Taylor \& Francis},
  \bibinfo{address}{Boca Raton, FL}.
\bibitem[{Kruskal(1958)}]{kruskal58}
\bibinfo{author}{Kruskal, W.}, \bibinfo{year}{1958}.
\newblock \bibinfo{title}{Ordinal measures of association}.
\newblock \bibinfo{journal}{Journal of the American Statistical Association}
  \bibinfo{volume}{53}, \bibinfo{pages}{814--861}.
\bibitem[{Levin et~al.(2008)Levin, Peres and Wilmer}]{levin2008}
\bibinfo{author}{Levin, D.A.}, \bibinfo{author}{Peres, Y.},
  \bibinfo{author}{Wilmer, E.L.}, \bibinfo{year}{2008}.
\newblock \bibinfo{title}{Markov chains and mixing times}.
\newblock \bibinfo{publisher}{American Mathematical Society},
  \bibinfo{address}{Providence, R. I.}
\bibitem[{Lindvall(1991)}]{Lindvall1991}
\bibinfo{author}{Lindvall, T.}, \bibinfo{year}{1991}.
\newblock \bibinfo{title}{W. {Doeblin} 1915--1940}.
\newblock \bibinfo{journal}{The Annals of Probability} \bibinfo{volume}{19},
  \bibinfo{pages}{929--934}.
\bibitem[{Lindvall(2002)}]{lindvall2002}
\bibinfo{author}{Lindvall, T.}, \bibinfo{year}{2002}.
\newblock \bibinfo{title}{Lectures on the coupling method}.
\newblock \bibinfo{edition}{Reprint of wiley 1992} ed.,
  \bibinfo{publisher}{Dover}, \bibinfo{address}{Mineola, NY}.
\bibitem[{Mai and Scherer(2012)}]{maischerer2012}
\bibinfo{author}{Mai, J.}, \bibinfo{author}{Scherer, M.}, \bibinfo{year}{2012}.
\newblock \bibinfo{title}{Simulating copulas: Stochastic Models, Sampling
  Algorithms and Applications}.
\newblock Number \bibinfo{number}{Vol. 4} in \bibinfo{series}{Series in
  Quantitative Finance}, \bibinfo{publisher}{Imperial College Press}.
\bibitem[{Miller(1982)}]{miller82}
\bibinfo{author}{Miller, J.O.}, \bibinfo{year}{1982}.
\newblock \bibinfo{title}{Divided attention: Evidence for coactivation with
  redundant signals.}
\newblock \bibinfo{journal}{Cognitive Psychology} \bibinfo{volume}{14},
  \bibinfo{pages}{247---279}.
\bibitem[{Nelsen(2006)}]{nelsen99}
\bibinfo{author}{Nelsen, R.B.}, \bibinfo{year}{2006}.
\newblock \bibinfo{title}{An introduction to copulas}. volume
  \bibinfo{volume}{139} of \textit{\bibinfo{series}{Lecture Notes in
  Statistics}}.
\newblock \bibinfo{edition}{2nd} ed., \bibinfo{publisher}{Springer-Verlag},
  \bibinfo{address}{New York, NY}.
\bibitem[{Pfeifer and Ne\v{s}lehov\'{a}(2004)}]{pfeifer2004}
\bibinfo{author}{Pfeifer, D.}, \bibinfo{author}{Ne\v{s}lehov\'{a}, J.},
  \bibinfo{year}{2004}.
\newblock \bibinfo{title}{Modeling and generating dependent risk processes for
  irm and dfa}.
\newblock \bibinfo{journal}{ASTIN Bulletin} \bibinfo{volume}{34},
  \bibinfo{pages}{333--360}.
\bibitem[{R\"{u}schendorf(2013)}]{ruesch2013}
\bibinfo{author}{R\"{u}schendorf, L.}, \bibinfo{year}{2013}.
\newblock \bibinfo{title}{Mathematical risk analysis}.
\newblock \bibinfo{publisher}{Springer-Verlag}, \bibinfo{address}{New York,
  NY}.
\bibitem[{Shea(1983)}]{shea83}
\bibinfo{author}{Shea, G.}, \bibinfo{year}{1983}.
\newblock \bibinfo{title}{Hoeffding's lemma}, in: \bibinfo{editor}{Kotz, S.},
  \bibinfo{editor}{Johnson, N.}, \bibinfo{editor}{Read, C.} (Eds.),
  \bibinfo{booktitle}{Encyclopedia of Statistical Science},
  \bibinfo{publisher}{Wiley}, \bibinfo{address}{New York, NY}. pp.
  \bibinfo{pages}{648--649}.
\bibitem[{Sklar(1959)}]{Sklar1959}
\bibinfo{author}{Sklar, A.}, \bibinfo{year}{1959}.
\newblock \bibinfo{title}{Fonctions de r\'{e}partition \`a $n$ dimensions et
  leurs marges}.
\newblock \bibinfo{journal}{Publ. Inst. Stat. Univ. Paris} \bibinfo{volume}{8},
  \bibinfo{pages}{229--231}.
\bibitem[{Strassen(1965)}]{strassen}
\bibinfo{author}{Strassen, V.}, \bibinfo{year}{1965}.
\newblock \bibinfo{title}{The exisistence of probability measures with given
  marginals}.
\newblock \bibinfo{journal}{Annals of Mathematical Statistics}
  \bibinfo{volume}{36}, \bibinfo{pages}{423--439}.
\bibitem[{Thorisson(2000)}]{thorisson2000}
\bibinfo{author}{Thorisson, H.}, \bibinfo{year}{2000}.
\newblock \bibinfo{title}{Coupling, stationarity, and regeneration}.
\newblock \bibinfo{publisher}{Springer Verlag}, \bibinfo{address}{New York,
  NY}.
\bibitem[{Trivedi and Zimmer(2005)}]{Trivedi2005}
\bibinfo{author}{Trivedi, P.}, \bibinfo{author}{Zimmer, D.},
  \bibinfo{year}{2005}.
\newblock \bibinfo{title}{Copula modeling: an introduction for practitioners}.
\newblock \bibinfo{journal}{Foundations and Trends in Econometrics}
  \bibinfo{volume}{1}, \bibinfo{pages}{1--111}.
\bibitem[{Ulrich and Giray(1986)}]{Ulrich1986}
\bibinfo{author}{Ulrich, R.}, \bibinfo{author}{Giray, M.},
  \bibinfo{year}{1986}.
\newblock \bibinfo{title}{Separate-activation models with variable base times:
  Testability and checking of cross-channel dependency}.
\newblock \bibinfo{journal}{Perception and Psychophysics} \bibinfo{volume}{39},
  \bibinfo{pages}{248--254}.

\end{thebibliography}

\end{document}